\newenvironment{proof}{\noindent {\bf Proof }}
{\hfill $\bullet$ \vspace{0.25cm}}
\def\one{{\bf 1}\hskip-.5mm}
\def\E{{\mathbb E}}
\def\R{{\mathbb R}}
\def\Z{{\mathbb Z}}
\def\N{{\mathbb N}}
\def\GG{{\mathcal G}}
\def\SS{{ \mathcal S}}
\def\PP{{ \cal P}}
\def\F {{\mathcal F}}
\def\LL{{\mathcal L}}
\def\II{{\mathcal I}}
\def\B{{\mathcal B}}
\newtheorem{theo}{Theorem}
\newtheorem{prop}{\indent Proposition}
\newtheorem{lem}{\indent Lemma}
\newtheorem{ex}{\indent Example}
\begin{document}

\title{A stochastic system with infinite interacting components to model the time evolution of the membrane potentials of a population of neurons}

\author{K. Yaginuma}

\affil{
              Instituto de Matem\'atica e Estat\'istica \\
              Universidade de S\~ao Paulo \\
             karinayy@ime.usp.br             
}
\date{}

\maketitle

\begin{abstract}
We consider a new class of interacting particle systems with a countable number of interacting components. The system represents the time evolution of the membrane potentials of an infinite set of interacting neurons. We prove the existence and uniqueness of the process, using a perfect simulation procedure. We show that this algorithm is successful, that is, we show that the number of steps of the algorithm is almost surely finite. We also construct a perfect simulation procedure for the coupling of a process with a finite number of neurons and the process with a infinite number of neurons. As a consequence, we obtain an upper bound for the error that we make when sampling from a finite set of neurons instead of the infinite set of neurons.

\end{abstract}

\section{Introduction}
\label{intro}
The activity of a neuron is manifested by the emission of \textit{action potentials} (spikes). Such activity is performed by each neuron in a system composed of a large number of interacting neurons. An estimate of the number of neurons in the human brain is about 80 billion. Moreover, it is estimated that a single neuron can make about 10 thousand distinct connections with other neurons in the system. While the shape of the action potential is essentially constant for a given neuron in such a way that each neuron can be distinguished by the shape of its action potential, the spike train generated by a neuron depends on its intrinsic characteristics as well as on stimuli from other neurons and external to the system.

The action potential occurs with a probability that is an increasing function of the concentration of sodium ions inside the cell, which enter the cell via voltage-gated sodium channels that are transiently open when the cell is excited. At the end of this process potassium channels are activated and produce an outward potassium current that restores the membrane potential to resting level. Once an action potential is generated, it propagates through the axon of the neuron and when it reaches the axon terminal neurotransmitters are released into the space between the neuron and a nearby neuron (synaptic cleft). These neurotransmitters diffuse through the synaptic cleft and some of them bind to synaptic receptors in the membrane of the other neuron and change its membrane potential, either exciting or inhibiting the neuron, for more details see \cite{gerstner-2002}.

This is the motivation for the introduction of a new class of interacting particle systems with infinite range that we consider in this paper. To deal with the problem of the large number of neurons seems natural to consider a system with a countable set of components with interactions of infinite range. The system describes the evolution of the membrane potentials of neurons in continuous time. 

This class of systems is based on the class of models introduced by Galves and L\"ocherbach \cite{galves-eva-2013}. They consider a system in which the probability of a spike of a given neuron depends on the accumulated activity of the system after its last spike. The temporal evolution of the system occurs in discrete time. For each neuron at each time the symbol 1 is assigned if there is a spike at that time or 0 otherwise. If one looks at a single neuron, the temporal evolution seems like a stochastic chain with memory of variable length where the neuron needs to look to the past until it finds its last spike. 

In \cite{bruno-2011}, Cessac suggested the same kind of dependence from the past in the framework of leaky integrate and fire models. He considered a system with a finite number of membrane potential processes in discrete time. The image of this process which can be described as a spike train is also a stochastic chain of infinite order where each neuron has to look back into the past until its last spike time. The process described by Cessac is a finite version of the model considered in \cite{galves-eva-2013}. 

In our model, we do not need to look back into the past in order to determine the probability of a spike, the information about the accumulated activity is in the membrane potential of the neurons. We consider an interacting particle system with a countable number of components having natural state space and interactions of infinite range.   

This paper is organized as follows. In Section \ref{definition} we present the model and the two main results. The first theorem proves the existence and uniqueness of the model. In the second theorem, we give an upper bound to the probability of sampling different outputs of a perfect simulation algorithm based on a coupled construction between a process considering a finite set of neurons and a process considering the whole system of neurons. Our main technical tool is a perfect simulation procedure for the processes. Perfect simulation is the name given to any algorithm that generates as output a sample of stationary processes whose distribution is ensured that follows a given law of probability. It can be shown that when such a sample can be produced the process exists and is unique by construction.  

The perfect simulation scheme that we used is the \textit{clan of ancestors}, introduced in \cite{ferrari-2002}. The algorithm is based on a two-step procedure: (i) in the first step we define a scheme of perfect simulation to generate the set of \textit{ancestors}, which are the predecessors that may have an influence on the target process, and (ii) in the second step, using the information of the first step, the sample is generated according
to the interaction rules of the target process. The second step, and hence the whole procedure, is feasible if these \textit{ancestors} form a finite set with probability one. 

Based in the clan of ancestor scheme, in \cite{galves-2010} they address the problem of perfect simulation for the Gibbs measure with infinite range interactions. In \cite{galves-nancy-2013}, they consider a particle system in $\Z^d$ with real state space and interactions of infinite range. Assuming that rates of change are continuous, they obtained a Kalikow decomposition of the rates with infinite range. Thus, rates are represented by a mixing of finite range rate. As an application of the decomposition they introduced a perfect simulation algorithm for the system. The same technique is used in \cite{galves-eva-2013}, where they introduced a Kalikow decomposition for the transitions probabilities with infinite order. In our case, we do not need to use the technique of Kalikow decomposition, since the rates of the system does not depend on the global configuration of the system. The technique used here is inspired by the construction of a system of interacting Markov processes presented in \cite{galves-1977}. 

In Section \ref{backward}, we introduce the backward sketch process which is the basis of the perfect simulation algorithm. Section \ref{algorithm} presents the coupled perfect simulation algorithm for the couple processes and presents the results ensures that the algorithm is successful. Sections \ref{proof13} and \ref{proof2} are devoted to the proof of the mains results of the paper.  

\section{Definition and results}\label{definition}
We consider an interacting particle systems on $\II$, a countable set of neurons, having state space given by $\{0,1,\dots\}$ and interactions of infinite range. The elements of the state space are called \textit{potential}. To each neuron in $\II$ we assign a potential. The potential of the neurons changes as times goes by.

We denote by $S = \{0,1,\dots\}^\II$ the configuration space with its product sigma algebra $\SS$. We call an element of $S$ a configuration. Configurations will be denoted by Greek letters $\xi, \eta, \zeta,\dots$. For each neuron $i \in \II$ at time $t \in \R$, $\xi_t(i) \in \N$ represent the membrane potential of $i$ at time $t$. Let $\{V^k_{i \to \cdot}; i \in \II, k\geq 1\}$ be a family of subsets of $\II$, where $V^k_{i \to \cdot}$ is the set of neurons that are affected by the spike of neuron $i$, when the membrane potential of $i$ immediately before the spike is greater or equal to $k$. Note that the sets $V^k_{i \to \cdot}$ doesn't need to be finite.

To define the dynamics of the process $(\xi_t)_{t\in\R}$, we need to introduce the family of transformations $\{\pi^k_i; i\in \II, k\geq0\}$ such that $\pi^k_i:S \to S$ defined as follows, for $i \in \II$ and $k\geq1$,
\begin{eqnarray*}
\pi^{k}_i(\xi)(u) & = &\left\{\begin{array}{ll}
\xi(u) & \hbox{\ if\ } \xi(i) < k, \\
\xi(u) & \hbox{\ if\ } u\neq i, u \not\in V^k_{i \to \cdot} \hbox{\ and\ } \xi(i) \geq k,\\
\xi(u)+1 & \hbox{\ if\ } u \neq i, u \in V^k_{i \to \cdot} \hbox{\ and\ } \xi(i) \geq k, \\
0 & \hbox{\ if\ } u = i \hbox{\ and\ } \xi(i) \geq k.
\end{array}\right.
\end{eqnarray*}

For $k = 0$
\begin{eqnarray*}
\pi^{0}_i(\xi)(u) & =& \left\{\begin{array}{ll}
\xi(u) & \hbox{\ if\ } u\neq i, \\
\xi(u)+1 & \hbox{\ if\ } u = i.
\end{array}\right.
\end{eqnarray*} 

In other words, given a configuration $\xi$ the transformation $\pi^k_i$ only changes the configuration of the system if the membrane potential of $i$ is equal or greater than the value $k$ of the transformation, that is,  if $\xi(i) \geq k$. If $\xi(i) < k$, we say that the neuron $i$ does not have sufficient membrane potential to produce a spike. 

In the case where an spike occurs in $i$, the neuron loses all its membrane potential returning to a resting state, that is represented by $0$. Simultaneously, the membrane potentials of the neurons within the set $V^k_{i \to \cdot}$ are changed by the addition of one unit in membrane potential of each neuron. The external stimulus is represented by the transformations $\pi^0_i$ for $i \in \II$. The external stimulus does not causes an immediate spikes in the neurons but increases the chances of a spike increasing the membrane potential of the neuron.

Now we are ready to introduce the dynamics of our process. We consider markovian particle systems on $S$ whose generator $\LL$ is defined as follows, for any boundary cylinder function $f:S \to \R$ 
\begin{eqnarray}\label{gerador}
\LL f (\xi) = \sum_{i \in \II}\sum_{k\geq 0}\lambda_i(k)[f(\pi^k_i(\xi)) - f(\xi)],
\end{eqnarray}
where $\lambda_i: \N \to \R_+$, for $i \in \II$. 

For each $i \in \II$, define 
\begin{eqnarray}
\Lambda_i = \lambda_i(0) + \sum_{k\geq1}\lambda_i(k) + \sum_{k\geq1}\sum_{j\in V_{\cdot \to i}^k}\lambda_j(k),
\end{eqnarray}
where $V^k_{\cdot \to i} = \{j \in \II : i \in V_{j \to \cdot}^k\}$, and 
\begin{eqnarray}
\rho_i = \dfrac{\lambda_i(0)}{\sum_{k\geq0}\lambda_i(k)}.
\end{eqnarray}

The existence and uniqueness of this class is granted in our first theorem. 

\begin{theo}\label{T1}
Assume that 
\begin{enumerate}
\item \begin{eqnarray}\label{T1H1}
\sup_{i \in \II}\Lambda_i = \beta < +\infty,
\end{eqnarray}
\item For all $i \in \II$
\begin{eqnarray}\label{T1H2}
\sum_{k\geq1}\lambda_i(k)(\rho_i)^k - \sum_{k\geq1}\sum_{j \in V^k_{\cdot \to i}}\lambda_j(k) \geq 0. 
\end{eqnarray}
\end{enumerate}
Then there exists a unique stationary process $(\xi_t)$ taking values ​​on $S$, with trajectory continuous on the right, limit on the left and generator $\LL$ given by (\ref{gerador}).
\end{theo}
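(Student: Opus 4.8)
The plan is to prove existence and uniqueness by a Harris-type graphical construction combined with the backward ``clan of ancestors'' procedure, using the two hypotheses to guarantee that this procedure terminates almost surely. Existence is obtained by running the procedure to read off the configuration at any space-time point; uniqueness follows because any stationary solution can be coupled to the same Poissonian data and must then satisfy the same backward relations, which admit a unique solution once the clan is finite.

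First I would attach to each $i\in\II$ and each $k\geq 0$ an independent homogeneous Poisson process $N^{i,k}$ on $\R$ with rate $\lambda_i(k)$, and prescribe that at a point $t$ of $N^{i,k}$ the system attempts the transformation $\pi^k_i$. By \reff{T1H1} the superposition of all marks that can ever modify a fixed neuron $i$ --- the points of $\bigcup_{k\geq 0}N^{i,k}$ together with the points of $N^{j,k}$, $k\geq 1$, $j\in V^k_{\cdot\to i}$ --- is a Poisson process of finite rate $\Lambda_i\leq\beta$, hence has only finitely many points in any bounded time window and a well-defined last point before any given time. This allows the basic recursion: to determine $\xi_0(i)$ one searches backward for the last time $\s<0$ at which neuron $i$ genuinely spikes (so $\xi_\s(i)=0$) and then sets $\xi_0(i)$ equal to the number of points of $N^{i,0}$ in $(\s,0)$ plus the number of \emph{genuine} incoming spikes in $(\s,0)$; a candidate incoming spike is a point $(j,v)$ of some $N^{j,k}$ with $j\in V^k_{\cdot\to i}$, and deciding whether it is genuine requires knowing whether $\xi_v(j)\geq k$, i.e.\ it spawns a subquery of the same type at $(j,v)$. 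Locating $\s$ itself likewise requires, in the worst case, resolving whether earlier points of $N^{i,k}$, $k\geq 1$, are genuine spikes, which again reduces to subqueries about $\xi$ at earlier times.

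The core of the proof is to show that the genealogy of subqueries so generated is finite almost surely, and this is where the hypotheses enter. I would dominate (or directly analyze) it by a branching-type structure whose individuals are ``(neuron, time)'' queries and whose offspring of a query $(i,u)$ are the incoming-spike candidates lying between $u$ and the previous genuine spike of $i$. Among the external and own-spike marks at neuron $i$, the conditional probability of raising the potential by one level is at least $\rho_i$, so the potential of $i$ reaches level $k$ with probability at least $\rho_i^{\,k}$; consequently a level-$k$ own-mark is a genuine spike with ``effective rate'' at least $\lambda_i(k)\rho_i^{\,k}$, while incoming spikes arrive at $i$ at rate $\sum_{k\geq 1}\sum_{j\in V^k_{\cdot\to i}}\lambda_j(k)$. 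Inequality \reff{T1H2} says precisely that the former dominates the latter, which makes the expected number of offspring of each query at most $1$; since the offspring law is non-degenerate, the associated branching process dies out almost surely, and combined with \reff{T1H1}, which controls the density of marks and hence the look-back time consumed by each generation, this forces the whole clan to be a.s.\ finite. I expect this to be the main obstacle: the branching here is not a plain Galton--Watson process, because offspring appear only after an a priori unknown look-back interval and are themselves only candidates subject to further resolution, so the naive offspring-mean computation must be replaced by a more careful supermartingale or renewal argument, and one must verify that $\sum_{k\geq 1}\lambda_i(k)\rho_i^{\,k}$ is a legitimate lower bound for the effective self-spiking rate appearing in that argument.

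Once the clan of $(i,0)$ is known to be a.s.\ finite for every $i$, the construction is completed by applying the transformations $\pi^k_i$ at the finitely many relevant space-time points in increasing time order, which yields $\xi_0$ and, by translating the Poisson data, the whole trajectory $(\xi_t)_{t\in\R}$; stationarity is inherited from the translation invariance of the family $\{N^{i,k}\}$, the c\`adl\`ag property is immediate, and the identification of the generator with \reff{gerador} is a routine computation using \reff{T1H1}. For uniqueness I would couple any stationary process with generator $\LL$ to the same $\{N^{i,k}\}$, observe that its configuration must obey the same backward relations used above, and invoke the a.s.\ finiteness of the clan to conclude that those relations pin the configuration down uniquely; hence any two such processes have the same law.
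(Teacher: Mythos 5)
Your proposal is correct and follows essentially the same route as the paper: a Poissonian graphical construction, a backward clan-of-ancestors process in which a level-$k$ own-mark of neuron $i$ is certified as a genuine spike with probability $\rho_i^k$ and incoming marks spawn new queries, with hypothesis \reff{T1H2} serving exactly as the condition that the mean offspring number per query is at most one (the paper's condition \reff{T3H1}), and termination established by a first-moment/supermartingale bound as you anticipate. The paper formalizes your ``genealogy of subqueries'' as the Markov jump process $(C^{(i)}_s)$ with rates $\nu(\sigma)$ and proves the a.s.\ finiteness in Theorem \ref{T3}, but the substance is the same.
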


The proof of the Theorem \ref{T1} is given in Section \ref{proof13}. It is based on the construction of a perfect simulation procedure for the process and showing that the number of steps of the algorithm is finite almost surely. 

The condition (\ref{T1H2}) is rather restrictive. As an illustration of Theorem \ref{T1} we give the following example of a system with satisfies the assumptions of the theorem. 

\begin{ex}\label{ex1}
Take $\II = \{1,2,\dots\}$ and assume that $|V^k_{\cdot \to i}| \leq c_k$, for all $i \in \II$ and $k\geq1$, where $c_k \in \N$ depends only on $k$. 
Let $s:\N \to \R_+$ be a function such that
\begin{eqnarray*}
\sum_{k\geq0}s(k) < + \infty.
\end{eqnarray*} 
Let $(a_i)_{i \in \II}$ be a non-decreasing sequence of real number with $a_i > 0$ for $i \in \II$. Assume that the sequence $(a_i)$ satisfies the following property
\begin{eqnarray*}
\sup\{\bar{a}_i: i \in \II\}< +\infty,
\end{eqnarray*}
where $\bar{a}_i = \sup\{a_j: j \in \cup_{k\geq 1}V^k_{\cdot \to i}\}$. For each $i \in \II$ define 
\begin{eqnarray*}
\lambda_i(k) = a_is(k) \hbox{\ for\ } k \geq 0.
\end{eqnarray*}

For a fixed $i \in \II$, 
\begin{eqnarray*}
\Lambda_i \leq a_i\sum_{k\geq0}s(k) + \bar{a}_i\sum_{k\geq1}c_ks(k).
\end{eqnarray*}
Thus, the condition (\ref{T1H1}) is satisfied if $\sum_{k\geq1}c_ks(k) < \infty$. Note that $\rho_i = s(0)/\sum_{k\geq0}s(k)$ for all $i \in \II$ and $\sum_{k\geq1}s(k)\rho^k \leq \sum_{k\geq1}s(k)c_k$. Then, if we have the following inequality 
\begin{eqnarray*}
\bar{a}_i\leq a_ir,\hbox{\  where\ } r \in (0,1),
\end{eqnarray*}
the condition (\ref{T1H2}) is satisfied. 
\end{ex}

Now let us consider the process $(\xi^{[F]}_t)$ on $S^{[F]} = \{0,1,\dots\}^F$, where $F$ is a finite subset of $\II$. A natural constraint on a finite set $F \subset \II$ of the transformations $\pi^i_k$ is given by, for $i \in F$ and $k\geq1$, 
\begin{eqnarray*}
\pi^{k,[F]}_i(\xi^{[F]})(u) & = &\left\{\begin{array}{ll}
\xi^{[F]}(u) & \hbox{\ if\ } \xi^{[F]}(i) < k, \\
\xi^{[F]}(u) & \hbox{\ if\ } u\neq i, u \not\in V^k_{i \to \cdot}(F) \hbox{\ and\ } \xi^{[F]}(i) \geq k,\\
\xi^{[F]}(u)+1 & \hbox{\ if\ } u \neq i, u \in V^k_{i \to \cdot}(F) \hbox{\ and\ } \xi^{[F]}(i) \geq k, \\
0 & \hbox{\ if\ } u = i \hbox{\ and\ } \xi^{[F]}(i) \geq k,
\end{array}\right.
\end{eqnarray*}
where $V^k_{i \to \cdot}(F) = V^k_{i \to \cdot}\cap F$. 

For $k = 0$
\begin{eqnarray*}
\pi^{0,^{[F]}}_i(\xi^{[F]})(u) & =& \left\{\begin{array}{ll}
\xi^{[F]}(u) & \hbox{\ if\ } u\neq i, \\
\xi^{[F]}(u)+1 & \hbox{\ if\ } u = i.
\end{array}\right.
\end{eqnarray*}

Then the generator $\LL^{[F]}$ of the process $(\xi_t^{[F]})$ is given by 
\begin{eqnarray}\label{gerador2}
\LL^{[F]} f (\xi^{[F]}) = \sum_{i \in F}\sum_{k\geq 0}\lambda_i(k)[f(\pi^{k,[F]}_i(\xi^{[F]})) - f(\xi^{[F]})].
\end{eqnarray}

The next theorem guarantees the existence of a perfect simulation algorithms based on a coupled construction of the two processes, $(\xi_t)$ and $(\xi^{[F]}_t)$. Moreover, provides an upper bound for the error we make when sampling from a finite set of neurons instead of consider the whole system.

\begin{theo}\label{T2}
Under the assumptions of Theorem \ref{T1}. For any $F$ finite subset of $\II$ and $i \in F$, there exist a coupled perfect sampling algorithm for the pair $(\xi(i),\xi^{[F]}(i))$ satisfying 
\begin{eqnarray}\label{T2B}
P(\xi(i) \neq \xi^{[F]}(i)) \leq \frac{\delta(F)}{1-\alpha}, 
\end{eqnarray}
where 
\begin{eqnarray}
\delta(F) = \sup_{C\subset \II: |C|<\infty}\frac{1}{\sum_{u' \in C}\Lambda_{u'}}\sum_{u \in C}\sum_{k\geq1}\sum_{j \in V^k_{\cdot \to u}}\lambda_j(k)\one\{j \in \bar{F}\}.
\end{eqnarray}
\end{theo}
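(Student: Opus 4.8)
The plan is to derive Theorem~\ref{T2} from the coupled perfect simulation algorithm of Section~\ref{algorithm}, by isolating the event on which the two reconstructions at $(i,0)$ can possibly disagree and estimating its probability along the backward exploration. Recall from Section~\ref{algorithm} that both processes are built on a single family of independent Poisson clocks $\{T^{j,k}:j\in\II,\ k\geq0\}$ of rates $\lambda_j(k)$ — the full process applying $\pi^k_j$ at each point of $T^{j,k}$, the truncated one using only the clocks with $j\in F$ and applying $\pi^{k,[F]}_j$ — and that, to output $(\xi_0(i),\xi^{[F]}_0(i))$ for a fixed $i\in F$, the algorithm reveals backward in time a finite clan of ancestors $\AA$ of the space--time point $(i,0)$ and then reads off both coordinates by a forward assignment along $\AA$. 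By the success statement of Section~\ref{algorithm} the number $M$ of steps of this exploration is a.s.\ finite, and its analysis furnishes the quantitative bound $\E[M]\leq\sum_{n\geq0}\alpha^n=(1-\alpha)^{-1}$. I would take all of this as given and only estimate $\P(\xi(i)\neq\xi^{[F]}(i))$.

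Say that an edge of $\AA$ is \emph{boundary-crossing} if it records the influence on some neuron $u$ visited by the exploration of a spike mark $(j,k,s)$ with $k\geq1$ and $j\in V^k_{\cdot\to u}\cap\bar F$; equivalently, $\AA$ contains a boundary-crossing edge iff $\AA$ visits some neuron of $\bar F$. The key inclusion is
\[
\{\xi_0(i)\neq\xi^{[F]}_0(i)\}\ \subseteq\ \{\AA\ \text{contains a boundary-crossing edge}\}.
\]
Indeed, on the complement every neuron appearing in $\AA$ lies in $F$; for $u,j\in F$ one has $u\in V^k_{j\to\cdot}\iff u\in V^k_{j\to\cdot}(F)$, and $\pi^0_j$ coincides with $\pi^{0,[F]}_j$ on the coordinate $u$, so the forward assignment along $\AA$ proceeds identically under the full and the $[F]$ rules — in particular the last reset time of each neuron visited, hence each intermediate potential, is the same under both. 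By induction on the points of $\AA$ taken in increasing time order one concludes $\xi_0(i)=\xi^{[F]}_0(i)$ on that event.

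It remains to bound $\P(\AA\ \text{contains a boundary-crossing edge})$. Run through the $M$ steps of the exploration; let $\GG_m$ be the information revealed after $m$ steps and $C_m\subset\II$ the (finite) set of active neurons at that moment. Conditionally on $\GG_m$ and on $\{m<M\}$, the next step examines the most recent not-yet-revealed mark affecting $C_m$; by the competition of the independent exponential clocks (with the multiplicity bookkeeping of Section~\ref{algorithm}), this mark is of type $(j,k)$ with probability proportional to $\lambda_j(k)$ among all marks affecting $C_m$, so the conditional probability that it introduces an ancestor in $\bar F$ is at most
\[
\frac{1}{\sum_{u'\in C_m}\Lambda_{u'}}\sum_{u\in C_m}\sum_{k\geq1}\sum_{j\in V^k_{\cdot\to u}}\lambda_j(k)\,\one\{j\in\bar F\}\ \leq\ \delta(F),
\]
the last inequality being the definition of $\delta(F)$ applied to the finite set $C=C_m$ — this is exactly why the supremum over finite subsets $C$ appears in $\delta(F)$. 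Since $\{m\leq M\}$ is $\GG_{m-1}$-measurable, summing over $m$ and using a union bound gives
\[
\P\big(\AA\ \text{contains a boundary-crossing edge}\big)\ \leq\ \sum_{m\geq1}\delta(F)\,\P(m\leq M)\ =\ \delta(F)\,\E[M]\ \leq\ \frac{\delta(F)}{1-\alpha},
\]
which together with the inclusion above yields \reff{T2B}.

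The routine ingredients are the competition-of-exponentials estimate and the inheritance of $\E[M]\leq(1-\alpha)^{-1}$ from Section~\ref{algorithm}. The main obstacle is the set inclusion of the second paragraph: one must check carefully that, as long as no boundary edge is used, replacing $V^k_{j\to\cdot}$ by $V^k_{j\to\cdot}(F)$ and deleting the clocks of $\bar F$ changes nothing at all in the forward reconstruction — in particular that the last reset time of every neuron visited is unaffected — and one must be sure that the backward exploration is carried out with respect to the \emph{untruncated} dynamics (the larger of the two clans), so that no potential discrepancy can go undetected. Making this precise is an induction over the time-ordered points of $\AA$ comparing the two assignments step by step.
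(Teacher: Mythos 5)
Your proposal is correct and follows essentially the same route as the paper: you reduce the discrepancy event to the event that the backward exploration ever draws an ancestor in $\bar F$ (the paper's $\{T^{(i)}_F\leq T^{(i)}_{STOP}\}$), bound the per-step conditional probability of such a draw by $q(F,C)\leq\delta(F)$, and sum over steps to get $\delta(F)\,\E(N^{(i)}_{STOP})\leq\delta(F)/(1-\alpha)$. The only difference is that you spell out the forward-assignment induction justifying the initial inclusion, which the paper asserts in one line.
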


Note that $\delta(F) \to 0$ when $F \to \II$. The proof of the inequality (\ref{T2B}) will be given in Section \ref{proof2}.

The two theorems show that the process considering an infinite number of neurons is a good reference to phenomena that appear when the number of neurons is very large.

\section{Perfect simulation}\label{PerfectSimulation}

Our goal is to construct a perfect simulation algorithm that generates as an output the value of the membrane potential of any fixed neuron $i \in \II$. The simulation procedure has two stages. In the first stage we determine the set of neurons whose activity influence the membrane potential of neuron $i$ under equilibrium. This stage will be called \textit{backward sketch procedure}. For this, we climbing down from time $0$ back to the past of neuron $i$ until the last time that a Poisson clock rang, with rate $\Lambda_i$. At this time, we choose a pair $(j,\bar{k})$, where $j \in \II$ and $\bar{k} \geq 0$, with a certain probability. If $j=i$ and $\bar{k}\neq 0$, we assign the value $0$ to the value of the membrane potential of neuron $i$. If $j\neq i$ and $\bar{k} \neq 0$, we restart the above procedure for the two neurons $i$ and $j$. Finally, if  $j=i$ and $\bar{k} = 0$ we restart the procedure just with neuron $i$. The procedure stops whenever for all neuron $j$ involved in this backward time evolution one choice of the pair  $(j,\bar{k}\neq0)$ occurs.

When this occurs, we start the second stage, called \textit{forward spin assignment procedure}, in this stage we go back to the future assigning the value of the membrane potential to all neurons visited during the first stage. We know that the value of the membrane potential of all neurons in the end of the first stage is $0$ at the time which $\bar{k}\neq0$ is chosen. Using this information we can determine the value of the membrane potential for all neurons visited in the first stage at each time that the Poisson clock rang up to time $0$.

\subsection{Backward sketch procedure}\label{backward}

Before describing formally the two algorithms described above, let us define the stochastic process which is behind the backward sketch procedure. Our goal is to define, for each neuron $i \in \II$, a process $(C^{(i)}_s)_{s\geq0}$ taking values in the set of finite subsets of $\II$, such that $C^{(i)}_s$ is the set of neurons at time $-s$ whose activity may affect the membrane potential of neuron $i$ at time $t=0$. 

Let $\Sigma =\{\sigma_i^{(j,\bar{k})}: i,j \in \II, \bar{k}\geq0\}$ be a family of transformations in $\PP(\II)$, the set of all finites subsets of $\II$, defined as follows. For any unitary set $\{u\}$, for $j\neq i$ and $\bar{k}\geq 1$

\begin{eqnarray}
\sigma_i^{(j,\bar{k})}(\{u\}) = \left\{\begin{array}{ll}
\{i,j\}, & \hbox{\ if\ } u = i \hbox{\ and\ } j \in V^{\bar{k} }_{\cdot \to i} \\
\{u\}, & \hbox{\ otherwise. } 
\end{array}\right.
\end{eqnarray}

For $j = i$ and $\bar{k} \geq 1$,
\begin{eqnarray}
\sigma_i^{(i,\bar{k})}(\{u\}) = \left\{\begin{array}{ll}
\emptyset, & \hbox{\ if\ } u = i \\
\{u\}, & \hbox{\ otherwise. } 
\end{array}\right.
\end{eqnarray} 

For  $j \in \II$ and $\bar{k} = 0$
\begin{eqnarray}
\sigma_i^{(j,\bar{k})}(\{u\}) = \{u\}, \hbox{\ for all\ } u. 
\end{eqnarray}

For any $C \subset \II$ finite, we define similarly
\begin{eqnarray}
\sigma^{(j,\bar{k})}_i(C) = \bigcup_{u \in C}\sigma^{(j,\bar{k})}_i(\{u\}).
\end{eqnarray}

Let $\nu: \Sigma \to \R_+$ be a function define as follows, for $i \in \II$, 
\begin{eqnarray}
\nu(\sigma_i^{(j,\bar{k})}) = \left\{\begin{array}{ll}
\lambda_j(\bar{k}), & \hbox{\ if\ } j \neq i \hbox{\ and\ } j \in V^{\bar{k}}_{\cdot \to i} \\
\lambda_i(k)(\rho_i)^{\bar{k}}, & \hbox{\ if\ } j = i \hbox{\ and\ } \bar{k}\geq 1 \\
\lambda_i(0) + \sum_{l\geq1}\lambda_{i}(l)(1-(\rho_i)^l), & \hbox{\ if\ } j = i \hbox{\ and\ } \bar{k}=0 \\
0, & \hbox{\ otherwise. \ }
\end{array}\right.
\end{eqnarray}

Let $\{N^\sigma, \sigma \in \Sigma\}$ be a family of independent Poisson point processes with rate $\nu(\sigma)$ respectively.  Let us introduce the Poisson point process $N$ on $\R_+\times\Sigma$ as follows
\begin{eqnarray*}
N(B\times I) = \sum_{\sigma \in I}N^\sigma(B),
\end{eqnarray*} 
where $B \in \B(0,+\infty)$ and $I \subset \Sigma$ finite. For each $i \in \II$, we can define 
\begin{eqnarray}
T^{(i)}_1 = \left\lbrace\begin{array}{ll}
\inf\{t>0: N(]0,t]\times\Sigma_i) > 0\}, & \hbox{\ if\ } \sum_{\sigma \in \Sigma_i}\nu(\sigma) > 0, \\
 + \infty, & \hbox{\ if\ }  \sum_{\sigma \in \Sigma_i}\nu(\sigma) = 0,
\end{array}\right.
\end{eqnarray}
where $\Sigma_i = \{\sigma_i^{(j,\bar{k})}: \bar{k}\geq0, j \in V^{\bar{k}}_{\cdot \to i} \cup \{i\}\}$.
 
By hypothesis (\ref{T1H1}) we have that $\sum_{\sigma \in \Sigma_i}\nu(\sigma) = \Lambda_i < \infty$. Thus, if we exclude the second case, $T^{(i)}_1$ is a exponential random variable with parameter $\Lambda_i$. Moreover, we know that at time $T^{(i)}_1$ a transformation $\Pi^{(i)}_1 \in \Sigma_i$, independent of $T^{(i)}_1$, occurs with probability given by
\begin{eqnarray}
P(\Pi^{(i)}_1 = \sigma) = \left\lbrace \begin{array}{ll}
\frac{\nu(\sigma)}{\Lambda_i}, & \hbox{\ if\ } \sigma \in \Sigma_i, \\
0, & \hbox{\ otherwise.\ }
\end{array}
\right.
\end{eqnarray}

Define 
\begin{eqnarray*}
\tilde{C}^{(i)}_1 = \Pi^{(i)}_1(\{i\}).
\end{eqnarray*} 
 
Let $T$ be a stop time of the process $N$ and $I$ a random subset of $\Sigma$, $\F_T-$measurable, such that $\sum_{\sigma\in I}\nu(\sigma) >0$. Define the random variable $S(T,I)$ by 
\begin{eqnarray*}
S(T,I) = \left\lbrace \begin{array}{ll}
\inf\{t>T: N(]T,t]\times I) > 0\}, & \hbox{\ if\ } T<\infty \hbox{\ and\ } \sum_{\sigma \in I}\nu(\sigma) > 0, \\
\infty, & \hbox{\ otherwise.\ }
\end{array}
\right.
\end{eqnarray*} 

If $S(T,I)$ is finite, we know that $S(T,I)-T$ is an exponential random variable with parameter $\sum_{\sigma \in I}\nu(\sigma)$ and at time $S(T,I)$ the transformation $\Pi(T,I) \in I$ occurs with distribution given by 
\begin{eqnarray*}
P(\Pi(T,I) = \sigma) = \left\lbrace \begin{array}{ll}
\frac{\nu(\sigma)}{\sum_{\sigma \in I}\nu(\sigma)}, & \hbox{\ if\ } \sigma \in I, \\
0, & \hbox{\ otherwise.\ }
\end{array}
\right.
\end{eqnarray*}
 
Now we can define recursively
\begin{eqnarray*}
T^{(i)}_{n+1} &=& S(T_n,\Sigma_{\tilde{C}^{(i)}_n}), \\
\tilde{C}^{(i)}_{n+1} &=& \Pi(T^{(i)}_{n},\Sigma_{\tilde{C}^{(i)}_n})(\tilde{C}^{(i)}_n),
\end{eqnarray*}
 where $\Sigma_{\tilde{C}^{(i)}_n} = \bigcup_{u\in \tilde{C}^{(i)}_n}\Sigma_u$. 
 
The sequence of successive jumps times $(T^{(i)}_n)$ and the Markov chain $(\tilde{C}^{(i)}_n)$ define the backward sketch process starting from $\{i\}$ at time $s=0$, denoted by $(C_s^{(i)})_{s\geq0}$. The infinitesimal generator $\GG$ of the process is given by
\begin{eqnarray}\label{GG}
\GG f(C) = \sum_{\sigma \in \Sigma}\nu(\sigma)[f(\sigma(C)) - f(C)],
\end{eqnarray}
where $f: \PP(\II) \to \R$ is any bounded cylinder function. 

\begin{prop}\label{P1}
\begin{enumerate}
\item Suppose that 
\begin{eqnarray}\label{P1H1}
\sum_{\sigma \in \Sigma_i}\nu(\sigma) < \infty \hbox{\ for all\ }i \in \II.
\end{eqnarray}
Then there exist a probability space $(\Omega,\B,P)$ on which we can construct, for all $i \in \II$, a version of the Markov jump process $(C^{(i)}_s)_{s\geq0}$ with initial condition at time $s=0$, $C^{(i)}_0=\{i\}$, and infinitesimal generator define in (\ref{GG}).
\item Suppose moreover that 
\begin{eqnarray}\label{P1H2}
\sup_{i \in \II}\sum_{\sigma \in \Sigma}\nu(\sigma)(|\sigma(\{i\})|-1) = c < \infty.
\end{eqnarray}
Then, for all $i \in \II$, the process $(C^{(i)}_s)$ does not explode (i.e., $|C^{(i)}_s| \leq e^{cs}$ for any $s\geq0$). 
\end{enumerate}
\end{prop}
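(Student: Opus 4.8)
The plan is to establish the two parts separately. For the first part, the idea is to realize all the processes $(C^{(i)}_s)_{s\ge0}$, $i\in\II$, simultaneously on the single probability space carrying the independent family $\{N^\sigma:\sigma\in\Sigma\}$, through the recursion $(T^{(i)}_n,\tilde C^{(i)}_n)$ written above. What has to be checked is that this recursion is well posed. By induction on $n$, each $\tilde C^{(i)}_n$ is a finite subset of $\II$: indeed $\tilde C^{(i)}_0=\{i\}$, and if $\tilde C^{(i)}_n$ is finite then $\tilde C^{(i)}_{n+1}=\sigma(\tilde C^{(i)}_n)$ for some $\sigma=\sigma^{(j,\bar{k})}_u$ with $u\in\tilde C^{(i)}_n$, and the definition of the transformations gives $\sigma(\tilde C^{(i)}_n)\subseteq\tilde C^{(i)}_n\cup\{j\}$. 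Since the families $\Sigma_u$, $u\in\II$, are pairwise disjoint, hypothesis \reff{P1H1} yields
\[
\sum_{\sigma\in\Sigma_{\tilde C^{(i)}_n}}\nu(\sigma)=\sum_{u\in\tilde C^{(i)}_n}\ \sum_{\sigma\in\Sigma_u}\nu(\sigma)=\sum_{u\in\tilde C^{(i)}_n}\Lambda_u<\infty ,
\]
so each $S(T^{(i)}_n,\Sigma_{\tilde C^{(i)}_n})$ is a well-defined (possibly infinite) stopping time and $T^{(i)}_n<T^{(i)}_{n+1}$ almost surely. Putting $\zeta^{(i)}=\lim_n T^{(i)}_n$ and $C^{(i)}_s=\tilde C^{(i)}_n$ for $T^{(i)}_n\le s<T^{(i)}_{n+1}$ (with any convention after $\zeta^{(i)}$ on $\{\zeta^{(i)}<\infty\}$, removed in the second part), one obtains, by the memorylessness of the Poisson processes and exactly as in the construction of \cite{galves-1977}, a Markov jump process started at $\{i\}$ whose generator on cylinder functions is $\GG$; all the $C^{(i)}$ are coupled through the same clocks.

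For the second part, set $g(C)=|C|$. The first step is the drift estimate $\GG g(C)\le c\,g(C)$. In \reff{GG} only the transformations $\sigma=\sigma^{(j,\bar{k})}_u$ whose centre $u$ lies in $C$ actually modify $C$, and for those a short case analysis — according to whether $j\ne u$ with $\bar{k}\ge1$, or $j=u$ with $\bar{k}\ge1$, or $\bar{k}=0$ — shows $|\sigma(C)|-|C|\le|\sigma(\{u\})|-1$. Summing over $u\in C$ and observing that, for each fixed $u$, $\sum_{\sigma\in\Sigma}\nu(\sigma)\bigl(|\sigma(\{u\})|-1\bigr)$ is exactly the quantity bounded by $c$ in \reff{P1H2}, one gets $\GG g(C)\le c|C|$.

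Next, apply Dynkin's formula to the stopped process $g(C^{(i)}_{s\wedge T^{(i)}_n})$ — legitimate because this process performs at most $n$ jumps and is therefore bounded — and insert the drift bound; with $\phi_n(s)=\E[\,g(C^{(i)}_{s\wedge T^{(i)}_n})\,]$ this gives
\[
\phi_n(s)\ \le\ 1+c\int_0^s\phi_n(r)\,dr ,
\]
hence $\phi_n(s)\le e^{cs}$ by Gr\"onwall's lemma. Finally, since the level sets $\{C\in\PP(\II):|C|\le n\}$ exhaust the countable state space $\PP(\II)$ and $g\ge0$ satisfies $\GG g\le cg$, a standard non-explosion criterion for Markov jump processes gives $\zeta^{(i)}=\infty$ almost surely; letting $n\to\infty$ and using Fatou's lemma then yields $\E[\,g(C^{(i)}_s)\,]\le e^{cs}$, which is the announced bound (understood as a bound on the expected cardinality of $C^{(i)}_s$).

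The step I expect to be the main obstacle is this last passage, from the stopped estimate $\phi_n(s)\le e^{cs}$ to honest non-explosion: one must exclude an accumulation of infinitely many jumps while $|C^{(i)}_s|$ remains bounded, and this is precisely where the fact that the level sets of $g$ exhaust the (countable) state space, together with the uniformity in $n$ of the Gr\"onwall bound, is used — via the localization/Fatou argument, equivalently the Lyapunov non-explosion theorem. The remaining ingredients, namely the finiteness induction and the elementary case analysis for $|\sigma(C)|-|C|$, are routine.
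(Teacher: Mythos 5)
Your proof follows essentially the same route as the paper: for part 1 the explicit construction from the Poisson clocks $\{N^\sigma\}$ via the recursion $(T^{(i)}_n,\tilde C^{(i)}_n)$, and for part 2 the drift bound $\sum_{\sigma}\nu(\sigma)\bigl(|\sigma(C)|-|C|\bigr)\le c|C|$ derived from \reff{P1H2}, fed into Dynkin's formula for the stopped process and closed by Gr\"onwall. Your handling of the final passage to the limit (non-explosion via the Lyapunov criterion plus Fatou) is more careful than the paper's, which simply lets $n\to\infty$ in the stopped estimate, but it is the same argument and the proposal is correct.
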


\begin{proof}
The first part of the proposition is proved by the above construction. We just need to prove that the process does not explode. For all $n \in \N$ and $i \in \II$, we have the following equality 
\begin{eqnarray}\label{P1P1}
\E(|C^{(i)}_{t\wedge T^{(i)}_n}|) = 1+\sum_{\sigma}\nu(\sigma)\E\left[\int^{t\wedge T^{(i)}_n}_0d(s)(|\sigma(C^{(i)}_s)| - |C^{(i)}_s|)\right].
\end{eqnarray}

By the hypothesis (\ref{P1H2}), we have 
\begin{eqnarray}\label{P1P2}
\sum_{\sigma}\nu(\sigma)(|\sigma(C^{(i)}_s)| - |C^{(i)}_s|) \leq c|C^{(i)}_s|
\end{eqnarray} 

From (\ref{P1P1}) and (\ref{P1P2}), we have that 
\begin{eqnarray*}
\E(|C^{(i)}_{t\wedge T^{(i)}_n}|) \leq 1 + c\E\left[\int^t_0|C^{(i)}_{s\wedge T^{(i)}_n}|ds\right].
\end{eqnarray*} 

By the inequality of Gronwall, we have 
\begin{eqnarray}\label{P1P3}
\E(|C^{(i)}_{t\wedge T^{(i)}_n}|) \leq e^{ct} \hbox{\ for all\ } n \in \N. 
\end{eqnarray}

Note that the above inequality holds for all $n \in \N$. Thus, taking the limit of equation (\ref{P1P3}), we have
\begin{eqnarray*}
\E(|C^{(i)}_t|) \leq e^{ct} \hbox{\ for\ } t \in \R_+.
\end{eqnarray*}
\end{proof}

In the same way, we can define the process $(C^{(i),[F]}_s)_{s\geq0}$, for any $F$ finite subset of $\II$ and $i \in \II$, define by the  infinitesimal generator $\GG^{[F]}$ given by
\begin{eqnarray}\label{GGF}
\GG^{[F]} f(C) = \sum_{\sigma \in \Sigma(F)}\nu(\sigma)[f(\sigma(C)) - f(C)],
\end{eqnarray}
where $f: \PP(F) \to \R$ is any bounded cylinder function and $\Sigma(F) = \{\sigma_i^{(j,\bar{k},F)} \in \Sigma : i \in F, j \in F, \bar{k}\geq0\}$. The transformations $\sigma^{(j,\bar{k},F)}_i: \PP(F) \to \PP(F)$ are equal to the transformations $\sigma^{(j,\bar{k})}$ restricted to the set $F$. 

For the perfect simulation of the coupled process, we need to construct a coupling procedure for the processes $(C^{(i)}_s)$ and $(C^{(i),[F]}_s)$ such that 
\begin{eqnarray}\label{P3I1}
C^{(i),[F]}_s \subseteq C^{(i)}_s, \hbox{\ for all\ } s \in \R_+.
\end{eqnarray}
The next result guarantees the existence of such coupling.  

\begin{prop}
For any $F \subset \II$ finite and $i \in F$, exist a coupling between the process $(C^{(i)}_s)$ and $(C^{(i),[F]}_s)$ such that the equation (\ref{P3I1}) is satisfied. 
\end{prop}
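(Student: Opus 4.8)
The plan is to construct the two processes on a common probability space using the same underlying Poisson point process $N$ on $\R_+\times\Sigma$, and to run them simultaneously with a carefully chosen driving rule so that the inclusion $C^{(i),[F]}_s\subseteq C^{(i)}_s$ is preserved jump by jump. First I would observe that $\Sigma(F)\subset\Sigma$, so the family of independent Poisson processes $\{N^\sigma:\sigma\in\Sigma\}$ used to build $(C^{(i)}_s)$ already contains, as a subfamily, the processes $\{N^\sigma:\sigma\in\Sigma(F)\}$ needed to build $(C^{(i),[F]}_s)$. Thus both jump chains will be defined from the same realization of $N$: the infinite process uses, at each step, the first atom of $N$ whose $\Sigma$-coordinate lies in $\Sigma_{\tilde C^{(i)}_n}=\bigcup_{u\in\tilde C^{(i)}_n}\Sigma_u$, while the finite process uses the first atom whose $\Sigma$-coordinate lies in $\Sigma(F)\cap\Sigma_{\tilde C^{(i),[F]}_n}$. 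Since both start from $\{i\}$ with $i\in F$, the inclusion holds at $s=0$.

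The core of the argument is an induction on the jump times. Assume $C^{(i),[F]}_s\subseteq C^{(i)}_s$ up to some time, and consider the next jump of the finite process, occurring at an atom of some $N^\sigma$ with $\sigma\in\Sigma(F)\cap\Sigma_{C^{(i),[F]}_s}$. Because $C^{(i),[F]}_s\subseteq C^{(i)}_s$ and $\Sigma_{C^{(i),[F]}_s}\subseteq\Sigma_{C^{(i)}_s}$, the very same atom is also a legal jump of the infinite process (its $\Sigma$-coordinate lies in $\Sigma_{C^{(i)}_s}$), so both processes react to it. Here I would use the key structural fact that for any transformation $\sigma\in\Sigma$ and any sets $A\subseteq B$ one has $\sigma(A)\subseteq\sigma(B)$ — this is immediate from the definition $\sigma(C)=\bigcup_{u\in C}\sigma(\{u\})$ — together with the observation that, on $\PP(F)$, the restricted transformation $\sigma^{(j,\bar k,F)}$ agrees with $\sigma^{(j,\bar k)}$ except that it drops any element falling outside $F$; hence $\sigma^{(j,\bar k,F)}(A)\subseteq\sigma^{(j,\bar k)}(A)\subseteq\sigma^{(j,\bar k)}(B)$ whenever $A\subseteq B$. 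Applying this with $A=C^{(i),[F]}_s$ and $B=C^{(i)}_s$ shows the inclusion is preserved across this jump. Conversely, at an atom that triggers a jump of the infinite process but whose $\Sigma$-coordinate is not in $\Sigma(F)\cap\Sigma_{C^{(i),[F]}_s}$, the finite process is unchanged while the infinite one can only grow or stay the same at the relevant coordinates outside $F$ (and at coordinates inside $F$ the transformation is either inactive or common to both), so the inclusion is again maintained. Since by Proposition \ref{P1} neither process explodes (both are dominated by the non-exploding infinite process, whose jump times $T^{(i)}_n\to\infty$), the jump times accumulate nowhere and the inclusion, verified at every jump, holds for all $s\in\R_+$.

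I would then package this as a formal statement that the pair $(C^{(i)}_s,C^{(i),[F]}_s)$ thus constructed is a Markov process on $\{(A,B)\in\PP(\II)\times\PP(F):B\subseteq A\}$ whose marginals have the prescribed generators $\GG$ and $\GG^{[F]}$ — the marginal check being routine since each coordinate, read alone, is exactly the jump chain built in Section \ref{backward} and just after \eqref{GGF}. The main obstacle, and the step deserving the most care, is the bookkeeping when the same Poisson atom acts on both processes but through \emph{different} transformations ($\sigma^{(j,\bar k)}$ on the infinite side versus its restriction $\sigma^{(j,\bar k,F)}$ on the finite side): one must check that the finite process never acquires a neuron the infinite process lacks, which is exactly where the monotonicity $A\subseteq B\Rightarrow\sigma^{(j,\bar k,F)}(A)\subseteq\sigma^{(j,\bar k)}(B)$ is used, and where one must also confirm that the jump-selection mechanism — ``first atom with $\Sigma$-coordinate in the allowed set'' — is consistent for the two nested allowed sets $\Sigma(F)\cap\Sigma_{C^{(i),[F]}_s}\subseteq\Sigma_{C^{(i)}_s}$. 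Everything else is a direct consequence of the thinning and superposition properties of Poisson processes already invoked in the construction of $S(T,I)$ and $\Pi(T,I)$.
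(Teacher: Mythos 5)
Your proposal is correct and follows essentially the same route as the paper: both processes are driven by the same source of randomness (the marked Poisson atoms, equivalently the jump times $T^{(i)}_n$ together with the drawn transformation $\Pi_n$), and the finite process reacts to a jump only when the drawn transformation lies in its allowed set $\Sigma_n(F)$, which forces $C^{(i),[F]}_s\subseteq C^{(i)}_s$ at every jump. Your writeup additionally spells out the inductive monotonicity step ($A\subseteq B\Rightarrow\sigma^{(j,\bar k,F)}(A)\subseteq\sigma^{(j,\bar k)}(B)$) that the paper leaves implicit in its closing sentence.
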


\begin{proof}
Let $(T^{(i)}_n)_{n\geq0}$ the sequence of successive jumps times define above. For $F$ finite subset of $\II$ and $i \in F$, we will build a coupling between the process $(C^{(i)}_s)$ and $(C^{(i),[F]}_s)$ as follow:
\begin{enumerate}
\item For $t \in [0,T^{(i)}_1[$, define $C_t^{(i)} = C_t^{(i),[F]} = \{i\}$.
\item For $n\geq 1$:
\begin{enumerate}
\item Given $C_{T^{(i)}_{n-1}}^{(i)}$, we choose a transformation $\Pi_n$ with probability distribution given by 
\begin{eqnarray*}
P(\Pi_n = \sigma) = \left\lbrace\begin{array}{ll}
\frac{\nu(\sigma)}{\sum_{\sigma' \in \Sigma_n}\nu(\sigma')}, & \hbox{\ if\ } \sigma \in \Sigma_n, \\
0, & \hbox{\ otherwise,\ }
\end{array}\right.
\end{eqnarray*} 
where $\Sigma_n = \{\sigma_u^{(j,k)}: u \in C^{(i)}_{T^{(i)}_{n-1}}, k\geq0, j \in V^k_{\cdot \to u}\cup\{u\}\}$. 
\item Define for $T_n^{(i)} \leq t < T^{(i)}_{n+1}$
\begin{eqnarray*}
C_t^{(i),[F]} = \left\lbrace\begin{array}{ll}
\Pi_n(C_{T^{(i)}_{n-1}}^{(i),[F]}), & \hbox{\ if\ } \Pi_n \in \Sigma_n(F), \\
C_{T^{(i)}_{n-1}}^{(i),[F]} & \hbox{\ otherwise,\ }
\end{array}\right.
\end{eqnarray*}
where $\Sigma_n(F) = \{\sigma_u^{(j,k)}: u \in C_{T^{(i)}_{n-1}}^{(i),[F]}, k\geq0, j \in V^k_{\cdot \to u}(F)\cup \{u\}\}$.
\end{enumerate} 
\end{enumerate}  
Thus, by the above construction, the inequality (\ref{P3I1}) holds. 
\end{proof}

\subsection{Algorithms}\label{algorithm}

Now we can introduce the algorithm that generates as an output the value of the pair $(\xi(i), \xi^{[F]}(i))$ of the coupled process, for any $F \subset \II$ finite and $i \in F$ at time $t=0$. The following variables will be used.

\begin{itemize}
\item $N$ is an auxiliary variable taking values in the set of non-negative integers $\{0,1,2,\dots\}$.
\item $N^{(i)}_{STOP}$ and $N^{(i),[F]}_{STOP}$ are counters taking values in the set of non-negative integers $\{0,1,2,\dots\}$.
\item $B$ and $B^{[F]}$ are arrays of elements of $\II\times\{0,1,2,\dots\}$.
\item $C$ and $C^{[F]}$ are variables taking values in the set of finite subsets of $\II$.
\item For $k\geq1$, $V(k)$ is a variable taking values in the set of finite subsets of $\II$.
\item $\bar{V}$ is a variable taking values in the set of finite subsets of $\II$. 
\item $p$ is a vector of elements of $\{0,1,2,\dots\}$.
\item $\xi_0(i)$ is a variable taking values in the set of non-negative integers $\{0,1,2,\dots\}$. 
\end{itemize}

We will present the backward sketch procedure for the coupled perfect simulation algorithm. The backward sketch procedure for the process $(\xi_t)$ can be immediately deduce ignoring the steps 20 to 30.

\vspace{5pt}

\begin{tabular}{l}
\textbf{Algorithm 1} Backward sketch procedure   \\
\hline
\end{tabular}
\begin{enumerate}
\item \textbf{Input:} $F \subset \II$, $i \in F$, $\left\lbrace V^k_{\cdot \to j}, j\in \II, k\geq 1\right\rbrace$

\textbf{Output:} $N^{(i)}_{STOP}$, $N^{(i),[F]}_{STOP}$, $B$ and $B^{[F]}$
\item $N \leftarrow 0, N^{[F]} \leftarrow 0, C \leftarrow \{i\}$ and $C^{[F]} \leftarrow \{i\}$
\item \textbf{While} $C \neq \emptyset$ \textbf{do}
\item \hspace{0.5cm}  $N \leftarrow N+1$
\item \hspace{0.5cm} For each $k \geq 1$, define $V(k) \leftarrow \bigcup_{u \in C}V^k_{\cdot \to u} \cup \{u\}$
\item \hspace{0.5cm} Choose $J$ and $K$ randomly in $\II\times\{0,1,\dots\}$ according to the probability distribution  $$P(J=j,K=k)=\frac{\lambda_j(k)\one\{j \in V(k)\}}{\sum_{u \in C}\Lambda_u}$$
\item \hspace{0.5cm} \textbf{If} $J \in C$ and $K \neq 0$ \textbf{then}
\item \hspace{1cm} $M \leftarrow 0$
\item \hspace{1cm} \textbf{While} $M < K$ \textbf{do}
\item \hspace{1.5cm} Choose an integer $K'$ randomly in $\{0,1,\dots\}$ according to the probability distribution $$P(K'=k)=\frac{\lambda_J(k)}{\sum_{l\geq0}\lambda_J(l)}$$
\item \hspace{1.5cm} \textbf{If} $K' = 0$ \textbf{then} $M \leftarrow M+1$
\item \hspace{1.5cm} \textbf{Else}, $M \leftarrow K+1$
\item \hspace{1.5cm} \textbf{End If}
\item \hspace{1cm} \textbf{End While}
\item \hspace{1cm} \textbf{If} $M = K$ \textbf{then} $C \leftarrow C/\{J\}$
\item \hspace{1cm} \textbf{End If}
\item \hspace{0.5cm} \textbf{Else} $C \leftarrow C \cup \{J\}$
\item \hspace{0.5cm} \textbf{End If}
\item \hspace{0.5cm} $B(N) \leftarrow (J, K)$
\item \hspace{0.5cm} \textbf{If} $C^{[F]} \neq \emptyset$
\item \hspace{1cm} \textbf{If} $J \in F$ \textbf{then}
\item \hspace{1.5cm} $N^{[F]} \leftarrow N^{[F]}+1$
\item \hspace{1.5cm} $B^{[F]}(N^{[F]}) \leftarrow (J,K)$
\item \hspace{1.5cm} \textbf{If} $J \in C^{[F]}$ \textbf{then}
\item \hspace{2cm} \textbf{If} $M = K$ \textbf{then} $C^{[F]} \leftarrow C^{[F]}/\{J\}$
\item \hspace{2cm} \textbf{End If}
\item \hspace{1.5cm} \textbf{End If} 
\item \hspace{1cm} \textbf{Else} $C^{[F]} \leftarrow C^{[F]}\cup\{J\}$
\item \hspace{1cm} \textbf{End If}
\item \hspace{0.5cm} \textbf{End If}
\item \textbf{End While}
\item $N^{(i),{[F]}}_{STOP} \leftarrow N$ and $N^{(i)}_{STOP} \leftarrow N^{[F]}$
\item \textbf{Return:} $N^{(i)}_{STOP}$, $N^{(i),[F]}_{STOP}$, $B$ and $B^{[F]}$
\end{enumerate}

In the end of the first stage, the algorithm returns $N^{(i)}_{STOP} (N^{(i),[F]}_{STOP})$ the number of steps we need to go back in the past to determine the value of $\xi(i) (\xi^{[F]}(i))$. As well as $B (B^{[F]})$ the set of ancestors of neuron $i$. 

If $B = B^{[F]}$, we know that all the neurons that were visited in the first stage belongs to the set $F$. Then we use one time the forward spin assignment procedure to determine $\xi(i) = \xi^{[F]}(i)$. If $B \neq B^{[F]}$, then we use the procedure twice independently, with input $B$, $N^{(i)}_{STOP}$ to determine $\xi(i)$ and $B^{[F]}$, $N^{(i),[F]}_{STOP}$ to determine $\xi^{[F]}(i)$. 

The second stage uses the output of the first stage to determine the values of $\xi(i)$ and $\xi^{[F]}(i)$ and the procedure is given by the following algorithm. 

\vspace{5pt}

\begin{tabular}{l}
\textbf{Algorithm 2} Forward spin assignment procedure   \\
\hline
\end{tabular}
\begin{enumerate}
\item \textbf{Input:} $i \in \II$, $N^{(i)}_{STOP}$, $B$, $\left\lbrace V^k_{j \to \cdot}, j \in \II, k \geq 1\right\rbrace$

\textbf{Output:} $\{\xi(i)\}$
\item $N \leftarrow N^{(i)}_{STOP}, \bar{V} \leftarrow \emptyset, p \leftarrow \emptyset, X_t(i) \leftarrow 0$
\item \textbf{For} $n \leftarrow N$ \textbf{to} 1 \textbf{do}
\item \hspace{0.5cm} $J \leftarrow B[n,1]$
\item \hspace{0.5cm} $K \leftarrow B[n,2]$
\item \hspace{0.5cm} \textbf{If} $J \not\in \bar{V}$ \textbf{then}
\item \hspace{1cm} $\bar{V} \leftarrow \bar{V}\bigcup J$
\item \hspace{1cm} $p[J] \leftarrow 0$	
\item \hspace{0.5cm}  \textbf{Else}
\item \hspace{1cm} \textbf{If} $K = 0$ \textbf{then} $p[J] \leftarrow p[J]+1$
\item \hspace{1cm} \textbf{End If}
\item \hspace{1cm} \textbf{If} $k \leq p[J]$ \textbf{then}
\item \hspace{1.5cm} $p[J] \leftarrow 0$
\item \hspace{1.5cm} \textbf{For} $m \in V^K_{J\to\cdot}$ \textbf{do}
\item \hspace{2cm} \textbf{If} $m \in \bar{V}/\{J\}$ \textbf{then} $p[m] \leftarrow p[m]+1$
\item \hspace{2cm} \textbf{End If}
\item \hspace{1.5cm} \textbf{End For}
\item \hspace{1cm} \textbf{End If}
\item \hspace{0.5cm} \textbf{End If}
\item \textbf{End For}
\item \textbf{Return} $\xi(i) \leftarrow p[i]$
\end{enumerate}

Let 
\begin{eqnarray*}
T^{(i)}_{STOP} = \inf\{s>0: C_s^{(i)} = \emptyset\}
\end{eqnarray*}
be the time in which the process $C^{(i)}_s$ reaches the empty set. Note that if $T^{(i)}_{STOP} < \infty$, then $C^{(i)}_s = \emptyset$ for all $s \geq T^{(i)}_{STOP}$. 

We also can define 
\begin{eqnarray}\label{ST}
N^{(i)}_{STOP} = \inf\{n>0: \tilde{C}^{(i)}_n = \emptyset\}
\end{eqnarray}
the number of steps that we have to go back to determine the set of ancestors. The definition in (\ref{ST}) is equivalent of the $N^{(i)}_{STOP}$ defined in the Algorithm 1. 
 
For the algorithm to be successful, we need to show that the number of steps $N^{(i)}_{STOP}$ is finite. This is the content of the next theorem.

\begin{theo}\label{T3}
Under the codition 
\begin{eqnarray}\label{T3H1}
\sup_{i \in \II}\sum_{\sigma\in \Sigma_i}\frac{\nu(\sigma)|\sigma(\{i\})|}{\Lambda_i} = \alpha \leq 1,
\end{eqnarray}
we have for any $i \in \II$,
\begin{eqnarray}\label{T3D1}
P(N^{(i)}_{STOP} > n) \leq \alpha^n
\end{eqnarray}
and
\begin{eqnarray}\label{T3D2}
\E(N^{(i)}_{STOP}) \leq \frac{1}{1-\alpha}.
\end{eqnarray}
\end{theo}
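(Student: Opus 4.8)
The plan is to reduce everything to the extinction time of the set‑valued Markov chain $(\tilde C^{(i)}_n)_{n\ge 0}$ constructed above. Since $\tilde C^{(i)}_n=\emptyset$ forces $\tilde C^{(i)}_m=\emptyset$ for all $m\ge n$, one has $\{N^{(i)}_{STOP}>n\}=\{\tilde C^{(i)}_n\neq\emptyset\}$ for every $n$, so it is enough to prove (\ref{T3D1}); then (\ref{T3D2}) follows from $\E(N^{(i)}_{STOP})=\sum_{n\ge 0}P(N^{(i)}_{STOP}>n)\le\sum_{n\ge 0}\alpha^n=\frac{1}{1-\alpha}$.

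First I would record a one‑step drift estimate. For $\sigma\in\Sigma_u$ and any finite $C\ni u$ the transformation $\sigma$ acts trivially off $u$, so $\sigma(C)=(C\setminus\{u\})\cup\sigma(\{u\})$ and hence $|\sigma(C)|\le |C|-1+|\sigma(\{u\})|$. Writing $m_n=|\tilde C^{(i)}_n|$ and recalling that at each step a transformation is drawn from $\bigcup_{u\in\tilde C^{(i)}_n}\Sigma_u$ with probability proportional to $\nu$ (normalising constant $\sum_{u\in\tilde C^{(i)}_n}\Lambda_u$), this bound together with (\ref{T3H1}) gives, on $\{m_n\ge 1\}$,
\begin{eqnarray*}
\E\big(m_{n+1}\mid \tilde C^{(i)}_n\big) &\le& \frac{1}{\sum_{u\in \tilde C^{(i)}_n}\Lambda_u}\sum_{u\in \tilde C^{(i)}_n}\sum_{\sigma\in\Sigma_u}\nu(\sigma)\big(m_n-1+|\sigma(\{u\})|\big)\;\le\; m_n-(1-\alpha),
\end{eqnarray*}
while on $\{m_n=0\}$ the chain is absorbed, so $\E(m_{n+1}\mid\tilde C^{(i)}_n)=0$. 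Taking expectations, $\E(m_{n+1})\le\E(m_n)-(1-\alpha)P(m_n\ge 1)$; since $(\E(m_n))_n$ is nonincreasing and nonnegative, summing over $n$ yields $(1-\alpha)\sum_{n\ge 0}P(m_n\ge 1)\le\E(m_0)=1$, which is exactly (\ref{T3D2}). The same computation, or directly the definition of $\alpha$, shows that each neuron is subcritical: if $\mathrm{id}_i$, $\mathrm{rem}_i$, $\mathrm{add}_i$ denote the total $\nu$‑weights in $\Sigma_i$ of the identity, of the reset transformations $\sigma_i^{(i,\bar k)}$, and of the branching transformations $\sigma_i^{(j,\bar k)}$ with $j\ne i$, then $\Lambda_i=\mathrm{id}_i+\mathrm{rem}_i+\mathrm{add}_i$ and $\alpha\ge\alpha_i=(\mathrm{id}_i+2\mathrm{add}_i)/\Lambda_i$, i.e. $\mathrm{add}_i\le\mathrm{rem}_i-(1-\alpha)\Lambda_i$.

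For the sharp tail (\ref{T3D1}) I would argue by induction on $n$, uniformly in $i\in\II$, the case $n=0$ being trivial. Conditioning on the first transformation applied to $\{i\}$: with probability $\mathrm{rem}_i/\Lambda_i$ it resets $i$ and $\tilde C^{(i)}_1=\emptyset$; with probability $\mathrm{id}_i/\Lambda_i$ it is the identity, so by the Markov property and the induction hypothesis the conditional probability of $\{\tilde C^{(i)}_{n+1}\neq\emptyset\}$ equals $P(\tilde C^{(i)}_{n}\neq\emptyset)\le\alpha^{n}$; with probability $\lambda_j(\bar k)/\Lambda_i$ it adds $j$, leaving the chain started from $\{i,j\}$. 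The crux is thus to bound $P(\tilde C^{(\{i,j\})}_n\neq\emptyset)$: here one uses that the transformations acting on the sub‑clans of two distinct neurons are governed by disjoint families of Poisson clocks, so emptying $\{i,j\}$ costs at most as many steps as emptying $\{i\}$ and $\{j\}$ separately, and then feeds this back into the recursion, where the weight $\mathrm{add}_i/\Lambda_i$ of the branching alternative — controlled by the subcriticality inequality above — must absorb the loss. I expect this to be the main obstacle: the naive "sum of two independent copies'' bound on $P(\tilde C^{(\{i,j\})}_n\neq\emptyset)$ is off by a polynomial factor in $n$, and the drift estimate alone only yields the expectation bound because $\alpha^{-n}|\tilde C^{(i)}_n|$ fails to be a supermartingale as soon as $m_n\ge 2$. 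One genuinely has to exploit that configurations of size $\ge 2$ relax strictly faster than single neurons — because the updated neuron is chosen with the $\Lambda$‑weighting and because a branching transformation pointing at an already‑present neuron acts as the identity — and quantifying this finer subcriticality so as to close the induction is where the real work lies.
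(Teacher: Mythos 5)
Your reduction to the extinction time of $(\tilde C^{(i)}_n)$ and your proof of the expectation bound (\ref{T3D2}) are correct, and take a more economical route than the paper's: you sum the one-step drift inequality $\E\big(|\tilde C^{(i)}_{n+1}|\mid \tilde C^{(i)}_n\big)\le |\tilde C^{(i)}_n|-(1-\alpha)$ on $\{\tilde C^{(i)}_n\neq\emptyset\}$ directly, whereas the paper reaches the same conclusion by introducing the auxiliary walk $S^{(i)}_n=\sum_{m\le n} Y^{J^{(i)}_m}_m$, noting that $S^{(i)}_n+(1-\alpha)n$ is a supermartingale, and applying optional stopping at $V^{(i)}_{STOP}$. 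Both arguments rest on exactly the same additive drift $-(1-\alpha)$ per jump, and both require $\alpha<1$ to give a nontrivial bound.

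The genuine gap is (\ref{T3D1}), which you do not prove, and your diagnosis of why it is hard is accurate. The paper disposes of it in one line by asserting the multiplicative contraction $\E\big(|\tilde C^{(i)}_{n+1}|\mid\tilde C^{(i)}_n\big)=\sum_{u\in\tilde C^{(i)}_n}\sum_{\sigma\in\Sigma_u}\nu(\sigma)|\sigma(\{u\})|/\Lambda_u\le\alpha\,|\tilde C^{(i)}_n|$ and then applying Markov's inequality to get $P(|\tilde C^{(i)}_n|\ge1)\le\alpha^n$. This is precisely the step you observe cannot hold for the jump chain: since only one transformation is applied per step, a set of size $m$ has conditional expected size at most $m-1+\alpha$ after one step, not $\alpha m$, and $m-1+\alpha>\alpha m$ whenever $m\ge2$. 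The paper's identity would be valid if $n$ indexed generations of the dominating branching process (all members of the current set updated simultaneously), but $N^{(i)}_{STOP}$ as defined in (\ref{ST}) counts individual jumps. Your worry can be made concrete: if from every singleton the chain dies with probability $p>1/2$ and branches to a fresh pair with probability $q=1-p$, then $\alpha=2q$, while $|\tilde C^{(i)}_n|$ is a $p$-down/$q$-up walk absorbed at $0$, whose extinction-time tail decays at rate $2\sqrt{pq}>2q=\alpha$; so the geometric bound (\ref{T3D1}) does not follow from the paper's computation and, for the jump-indexed $N^{(i)}_{STOP}$, appears to fail outright. In short: your proposal establishes finiteness of $N^{(i)}_{STOP}$ and the bound $\E(N^{(i)}_{STOP})\le 1/(1-\alpha)$ (which is all that the proofs of Theorems \ref{T1} and \ref{T2} actually use), but neither your sketch nor the paper's argument yields (\ref{T3D1}) as stated.
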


The proof of the Theorem \ref{T3} will be given at Section \ref{proof13}. Note that the condition (\ref{T1H2}) implies the condition (\ref{T3H1}). Let us start by proving Theorem \ref{T3}.

\section{Proof Theorem \ref{T1} and \ref{T3}}\label{proof13}

This section is devoted to the proof of Theorem \ref{T1} and Theorem \ref{T3}. To proof the Theorem \ref{T1} we use the results given by the Theorem \ref{T3}. 

\subsection{Proof Theorem \ref{T3}}
Let us start by proving the first inequality (\ref{T3D1}). Let $Z_n^{(i)} = |\tilde{C}^{(i)}_n|$ be the cardinality of the set $\tilde{C}^{(i)}_n$ after $n$ steps of the Algorithm 1.    

For $i \in \II$, we have 
\begin{eqnarray}\label{T3P1}
P\left\lbrace N^{(i)}_{STOP} > n\right\rbrace &\leq & P\left\lbrace |\tilde{C}^{(i)}_n| \geq 1\right\rbrace \nonumber \\
& = & \E(Z_n)
\end{eqnarray}

Given $\tilde{C}^{(i)}_n$, the expected value of $Z_{n+1}$ is given by
\begin{eqnarray*}
\E(Z_{n+1}|\tilde{C}^{(i)}_n) = \sum_{u\in \tilde{C}^{(i)}_n}\sum_{\sigma\in \Sigma_u}\frac{\nu(\sigma)}{\sum_{\sigma'\in\Sigma_u}\nu(\sigma')}|\nu(\sigma)|.
\end{eqnarray*}

By the condition (\ref{T3H1})
\begin{eqnarray}\label{T3P2}
\E(Z_{n+1}) &\leq & \alpha\E(Z_n) \nonumber \\
&\leq & \alpha^{n+1}.
\end{eqnarray}

Then by (\ref{T3P1}) and (\ref{T3P2}), we conclude the proof of the first part   
\begin{eqnarray*}
P\left\lbrace N^{(i)}_{STOP} > n\right\rbrace \leq \alpha^n. 
\end{eqnarray*}

For the second inequality, let $\{Y^{(i)}_n; n\in \N, i \in \II\}$ a family of independent random variables taking values in $\{-1,0,1\}$, independent of the process $(\tilde{C}_n^{(i)})$, such that 
\begin{eqnarray*}
P(Y^{(i)}_n = -1) &=& \frac{\sum_{k\geq1}\lambda_i(k)(\rho_i)^k}{\Lambda_i}, \\
P(Y^{(i)}_n = 0) &=& \frac{\lambda_i(0)+\sum_{k\geq1}\lambda_i(k)(1-(\rho_i)^k)}{\Lambda_i},  \\
P(Y^{(i)}_n = 1) &=& \frac{\sum_{k\geq1}\sum_{j \in V^k_{\cdot \to i}}\lambda_j(k)}{\Lambda_i}.
\end{eqnarray*} 

By condition \ref{T3H1}, we have  
\begin{eqnarray}
\sup_{i \in \II}\E(Y^{(i)}_n) = \alpha - 1 \leq 0.
\end{eqnarray}

For any $i \in \II$, let us call $J^{(i)}_n$ the index of the neuron whose Possion clock rings at the $n-$th jump of the process $(C^{(i)}_s)$.  Recall that $J_n^{(i)}$ are conditionally independet given the sequence $(\tilde{C}^{(i)}_n)_n$ such that 
\begin{eqnarray*}
P(J^{(i)}_n = j| \tilde{C}^{(i)}_{n-1}) = \frac{\sum_{k\geq0}\lambda_j(k)}{\sum_{u \in \tilde{C}^{(i)}_{n-1}}\Lambda_u}, \hbox{\ if\ } j \in \tilde{C}^{(i)}_{n-1}.
\end{eqnarray*}

Let $(S_n^{(i)})_{n\geq0}$ be a process taking values in $\Z$, with initial condition $S^{(i)}_0 = 0$, such that 
\begin{eqnarray*}
S^{(i)}_n = \sum_{m=1}^n Y_m^{J^{(i)}_m}, \hbox{\ for\ } n \geq 1.  
\end{eqnarray*}

Note that by construction, $M_n = S^{(i)}_n + (1-\alpha)n$ is a super-martingale. Then a very rough upper bound is 
\begin{eqnarray*}
Z_n^{(i)} \leq  1 + S_n^{(i)} \hbox{\ as long as\ } n \leq V_{STOP},
\end{eqnarray*}
where $V_{STOP} = \min\{n\geq 1: S_n = -1\}$. Then , by construction 
\begin{eqnarray*}
N^{(i)}_{STOP} \leq V^{(i)}_{STOP}.
\end{eqnarray*}

By the stopping rule for super-martingales, we have that 
\begin{eqnarray*}
\E(S^{(i)}_{V^{(i)}_{STOP}\wedge N}) +(1-\alpha)\E(V^{(i)}_{STOP}\wedge N) \leq 0.
\end{eqnarray*}

But notice that 
\begin{eqnarray*}
\E(S^{(i)}_{V^{(i)}_{STOP}\wedge N}) = - P(V^{(i)}_{STOP} \leq N) + \E(S^{(i)}_N; V^{(i)}_{STOP} > N).
\end{eqnarray*}

On $\{V^{(i)}_{STOP} > N\}$, $S^{(i)}_N \geq 0$, hence we have that $\E(S^{(i)}_{V^{(i)}_{STOP}\wedge N}) \geq -P(V^{(i)}_{STOP} \leq N) $. We conclude that 
\begin{eqnarray*}
\E(V^{(i)}_{STOP}\wedge N) \leq \frac{1}{1-\alpha}P(V^{(i)}_{STOP} \leq N).
\end{eqnarray*}

Now, letting $N\to \infty$, we get 
\begin{eqnarray*}
\E(V^{(i)}_{STOP}) \leq \frac{1}{1-\alpha} \hbox{\ and therefore\ } \E(N^{(i)}_{STOP}) \leq \frac{1}{1-\alpha}. 
\end{eqnarray*}

\subsection{Proof of Theorem \ref{T1}}

It can be shown that if we can construct a perfect simulation algorithm and such algorithm is successful, then the process exist and is unique by construction. The algorithm is successful if the number of steps $N^{(i)}_{STOP}$ is finite almost surely. 

The construction of the algorithm was given in Section \ref{PerfectSimulation}. Here we are going to proof that the number of steps $N^{(i)}_{STOP}$ is finite almost surely. 

By Theorem \ref{T3}, we have that 
\begin{eqnarray*}
P(N^{(i)}_{STOP} > n) \leq \alpha^n,
\end{eqnarray*}  
where $\alpha \in (0,1)$. This implie that 
\begin{eqnarray*}
\sum_{n=0}^{\infty}P(N^{(i)}_{STOP} > n) \leq \frac{1}{1-\alpha} < \infty
\end{eqnarray*}

Thus, by Borel Cantelli, we have that 
\begin{eqnarray*}
P(N^{(i)}_{STOP} > n) = 0,
\end{eqnarray*} 
for all $n$ sufficiently large. Therefore, the Algorithm 1 is successful and the exist a unique stationary process $(\xi_t)$ taking values in $S$ with generator given by (\ref{gerador}). 

\section{Proof of Theorem \ref{T2}}\label{proof2}

We have to show that the coupled perfect sampling Algorithm 1 achieves the bound (\ref{T2B}). Let $T_F^{(i)}$ be the first time that a neuron $j$ that does not belong to the set $F$ is chosen in the draws that define the set of ancestors, given by
\begin{eqnarray}
T^{(i)}_F = \inf\{t>0: C^{(i)}_t\cup \bar{F} \neq \emptyset\}.
\end{eqnarray}

Recall that in order to construct $\xi$ and $\xi^{[F]}$, we use the same Poisson point processes for all $j \in F$. Thus we have that 
\begin{eqnarray}
P(\xi(i) \neq \xi^{[F]}(i)) \leq P(T^{(i)}_F \leq T^{(i)}_{STOP}).
\end{eqnarray}

By Lemma given below we conclude the proof of Theorem \ref{T3}.

\begin{lem}
For any $F$ finite subset of $\II$ and $i \in F$, we have that 
\begin{eqnarray}
P(T^{(i)}_F \leq T^{(i)}_{STOP}) \leq \frac{\delta(F)}{1-\alpha}.
\end{eqnarray}
\end{lem}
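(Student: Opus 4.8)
The plan is to bound $P(T^{(i)}_F \leq T^{(i)}_{STOP})$ by tracking, step by step, the probability that the backward sketch process picks a neuron outside $F$ before it empties. Recall that the sketch process $(\tilde C^{(i)}_n)$ evolves by choosing, at each step $n$ with current set $C = \tilde C^{(i)}_{n-1}$, a pair $(J,K)$ with probability $\lambda_J(K)\one\{J\in V(K)\}/\sum_{u\in C}\Lambda_u$, where $V(K) = \bigcup_{u\in C}V^K_{\cdot\to u}\cup\{u\}$. The event $\{T^{(i)}_F \leq T^{(i)}_{STOP}\}$ is exactly the event that at some step $n \leq N^{(i)}_{STOP}$ the neuron $J$ chosen lies in $\bar F$ (more precisely, that the chosen pair is of the ``incoming influence'' type $j\in V^k_{\cdot\to u}$ with $j\in\bar F$; a choice $J=i$ is never in $\bar F$ since $i\in F$). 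So I would write
\begin{eqnarray*}
P(T^{(i)}_F \leq T^{(i)}_{STOP}) \leq \sum_{n\geq 1} P\bigl(J_n \in \bar F,\ N^{(i)}_{STOP} \geq n\bigr),
\end{eqnarray*}
where $J_n$ is the neuron chosen at step $n$.

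Next I would estimate each summand by conditioning on $\tilde C^{(i)}_{n-1}$. On the event $\{N^{(i)}_{STOP}\geq n\}$ we have $\tilde C^{(i)}_{n-1}\neq\emptyset$, and the conditional probability that the chosen pair $(J,K)$ has $J\in\bar F$ is
\begin{eqnarray*}
\frac{1}{\sum_{u'\in C}\Lambda_{u'}}\sum_{u\in C}\sum_{k\geq1}\sum_{j\in V^k_{\cdot\to u}}\lambda_j(k)\one\{j\in\bar F\} \leq \delta(F),
\end{eqnarray*}
by the very definition of $\delta(F)$ as the supremum of this ratio over finite $C\subset\II$. Here the point is that the contributions from $J=i$ and $K=0$ never involve a neuron of $\bar F$, so only the incoming-influence terms $j\in V^k_{\cdot\to u}$ appear in the numerator, matching $\delta(F)$ exactly. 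Hence $P(J_n\in\bar F \mid \tilde C^{(i)}_{n-1}) \leq \delta(F)$ pointwise on $\{N^{(i)}_{STOP}\geq n\}$, and therefore
\begin{eqnarray*}
P\bigl(J_n\in\bar F,\ N^{(i)}_{STOP}\geq n\bigr) \leq \delta(F)\, P\bigl(N^{(i)}_{STOP}\geq n\bigr) \leq \delta(F)\,\alpha^{n-1},
\end{eqnarray*}
the last step using the tail bound $P(N^{(i)}_{STOP} > n) \leq \alpha^n$ from Theorem \ref{T3} (so $P(N^{(i)}_{STOP}\geq n) = P(N^{(i)}_{STOP} > n-1) \leq \alpha^{n-1}$). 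Summing the geometric series gives $\sum_{n\geq1}\delta(F)\alpha^{n-1} = \delta(F)/(1-\alpha)$, which is the claim.

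The main obstacle — and the step requiring the most care — is the conditional estimate in the middle: one must be sure that the quantity being bounded at step $n$, namely the conditional probability of drawing a neuron in $\bar F$ given the current ancestor set $C$, is genuinely $\le \delta(F)$ and not merely $\le \delta(F)$ after some additional manipulation. This hinges on (i) correctly identifying which of the transformation types in the draw can introduce a $\bar F$-neuron (only the $j\in V^k_{\cdot\to u}$, $k\ge1$, $j\ne u$ type, since the ``self'' transformations keep the index inside $C\subseteq$ the reachable set and $i\in F$), and (ii) recognizing that the ratio appearing is a special case of the supremum defining $\delta(F)$, with $C = \tilde C^{(i)}_{n-1}$ playing the role of the finite set $C$. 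A secondary technical point is the measurability/independence bookkeeping needed to pass from the pointwise conditional bound to $P(J_n\in\bar F, N^{(i)}_{STOP}\geq n)\le \delta(F)P(N^{(i)}_{STOP}\geq n)$: since $\{N^{(i)}_{STOP}\geq n\}$ is determined by $\tilde C^{(i)}_{n-1}$ (it is $\{\tilde C^{(i)}_{n-1}\neq\emptyset\}$ together with earlier history) and hence is $\F_{T_{n-1}}$-measurable, while $J_n$ is drawn afterward, the bound factorizes cleanly by the tower property.
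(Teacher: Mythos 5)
Your proposal is correct and follows essentially the same route as the paper: decompose the event over the steps $n$ of the backward sketch, condition on the current ancestor set $\tilde C^{(i)}_{n-1}=C$ to identify the conditional probability of drawing a neuron in $\bar F$ as the ratio $q(F,C)\le\delta(F)$, and then sum using the tail/expectation bound for $N^{(i)}_{STOP}$ from Theorem \ref{T3}. The only cosmetic difference is that you bound each term by $\delta(F)\alpha^{n-1}$ and sum the geometric series, whereas the paper sums $P(N^{(i)}_{STOP}>n-1)$ into $\E(N^{(i)}_{STOP})\le 1/(1-\alpha)$; these are equivalent.
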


\begin{proof}
For $F$ and $C$ finite subsets of $\II$ define the event 
\begin{eqnarray*}
A_n(F,C) = \left\lbrace \Pi(T^{(i)}_{n},\Sigma_C) = \sigma, \hbox{\ for some\ } \sigma \in \bar{\Sigma}_C(F)\right\rbrace,
\end{eqnarray*}
where $\bar{\Sigma}_C(F) = \{\sigma_u^{(j,k)}: u \in C, k\geq1, j\in V^k_{\cdot \to u}\cap \bar{F}\}$ and $\Pi$ is a random variable taking values in $\Sigma$ define in Section \ref{PerfectSimulation}. Thus, 
\begin{eqnarray*}
\{T^{(i)}_F \leq T^{(i)}_{STOP}\} = \bigcup_{n\geq1}\bigcup_{C \in \PP(F)}\{A_n(F,\tilde{C}_{n-1}^{(i)}), \tilde{C}_{n-1}^{(i)}=C, N^{(i)}_{STOP} > n-1\}
\end{eqnarray*}
and 
\begin{eqnarray*}
P(T_F^{(i)} < T^{(i)}_{STOP}) &\leq & \sum_{n\geq1}\sum_{C \in \PP(\II)}P(A_n(F,\Sigma_{\tilde{C}_{n-1}^{(i)}})|\tilde{C}_{n-1}^{(i)}=C, N^{(i)}_{STOP} > n-1) \\
& &  \qquad \qquad \qquad \qquad \qquad \  \times P(\tilde{C}_{n-1}^{(i)}=C, N^{(i)}_{STOP} > n-1) \\
&=& \sum_{n\geq1}\E(P(A_n(F,\Sigma{\tilde{C}_{n-1}^{(i)}})|\tilde{C}_{n-1}^{(i)}=C, N^{(i)}_{STOP} > n-1)).
\end{eqnarray*}

Given $\tilde{C}^{(i)}_{n-1}$, we have that 
\begin{eqnarray*}
P(\Pi(T^{(i)}_n,\Sigma{\tilde{C}^{(i)}_{n-1}}) = \sigma^{(j,k)}_u) = \frac{\lambda_j(k)\one\{j\in \bar{F}\}}{\sum_{u'\in \tilde{C}^{(i)}_{n-1}}\Lambda_{u'}},
\end{eqnarray*}
for $u \in \tilde{C}^{(i)}_{n-1}$, $k\geq1$ and $j \in V^k_{\cdot \to u}$. Hence, 
\begin{eqnarray*}
P(A_n(F,\Sigma{\tilde{C}_{n-1}^{(i)}})|\tilde{C}_{n-1}^{(i)}=C, N^{(i)}_{STOP} > n-1) = q(F,C)\one\{N^{(i)}_{STOP} > n-1\},
\end{eqnarray*}
where 
\begin{eqnarray*}
q(F,C) = \frac{1}{\sum_{u'\in C}\Lambda_{u'}}\sum_{u\in C}\sum_{k\geq1}\sum_{j\in V^k_{\cdot\to u}}\lambda_j(k)\one\{j \in \bar{F}\}
\end{eqnarray*}

Note that, fo any $C$ finite subset of $\II$, we have that $q(F,C) \in [0,1]$. Define, for $F$ finite subset of $\II$,
\begin{eqnarray}
\delta(F) = \sup\{q(F,C): C \hbox{\ finite subset of\ } \II\}.
\end{eqnarray}

Thus
\begin{eqnarray*}
P(T_F^{(i)} < T^{(i)}_{STOP}) &\leq & \sum_{n\geq1}\E(\delta(F)\one\{N^{(i)}_{STOP} > n-1\}) \\
& = & \delta(F)\sum_{n\geq1}P(N^{(i)}_{STOP} > n-1) \\
& = & \delta(F)\E(N^{(i)}_{STOP}) 
\end{eqnarray*}

Concluding the proof, since $\E(N^{(i)}_{STOP})=\frac{1}{1-\alpha}$ by Theorem \ref{T3}. 
\end{proof}

\section*{Acknowledgments}
This work was produced as part of the activities of FAPESP  Research, Innovation and Dissemination Center for Neuromathematics (grant 2013/ 07699-0, S. Paulo Research Foundation) and CAPES/Nuffic project (grant 038/12). Many thanks to Antonio Galves by all the guidance throughout this work. I would like to thank Antonio Carlos Roque da Silva Filho for helpfull discussions.

\bibliography{bibliografia}
\nocite{fernandez-2002}
\nocite{ferrari-2002}
\nocite{gallo-2011}
\nocite{galves-1977}
\nocite{galves-2010}
\nocite{galves-eva-2010}
\nocite{galves-nancy-2013}
\nocite{propp-96}
\nocite{spitzer-1970}
\nocite{liggett-2011}
\nocite{bruno-2011}

\end{document}